\documentclass[nofootinbib, reprint, amsmath,amssymb, aps]{revtex4-2}
\usepackage{amsthm}
\newtheorem{theorem}{Theorem}[section]
\newtheorem{proposition}[theorem]{Proposition}
\newtheorem{corollary}[theorem]{Corollary}

\usepackage{graphicx}
\usepackage{bm}
\usepackage[colorlinks]{hyperref}
\usepackage{stackrel}
\usepackage{subcaption}
\usepackage{listings}
\usepackage{courier}

\newcommand{\be}{\begin{equation}}
\newcommand{\ee}{\end{equation}}
\newcommand{\bea}{\begin{eqnarray}}
\newcommand{\eea}{\end{eqnarray}}

\begin{document}

\title{Statistical Flux Freezing with Magnetic Path-lines in Turbulence}

\author{Amir Jafari}
\email{elenceq@jhu.edu}
\begin{abstract}
The classical Alfv\'en flux--freezing theorem states that, in ideal
magnetohydrodynamics, magnetic flux is transported by the plasma flow.
In turbulent plasmas the velocity and magnetic fields can be spatially
rough, so the regularity assumptions supporting deterministic transport
need not hold. Earlier stochastic flux-freezing formulations use random
fluid trajectories to represent magnetic transport. Motivated by the
magnetic path-line description, we study the stochastic regularization
of $\dot{\mathbf{X}}(t)=\mathbf{B}(\mathbf{X}(t),t)$ for a prescribed
divergence-free field. Under a persistent backward pair-dispersion
assumption, every weak subsequential zero-noise limit of the endpoint
law is non-Dirac. With path-space tightness and continuity of the limiting
drift, the limiting laws are supported on nonunique path-line histories.
This kinematic statement is distinguished from magnetic-flux transport,
which additionally requires compatibility with the induction equation.
For incompressible, constant-density resistive MHD, with $\mathbf{B}$
in Alfv\'en-speed units, the Els\"asser fields
$\mathbf{z}^{\pm}=\mathbf{u}\pm\mathbf{B}$ supply compatible drifts
because $\mathbf{z}^{\pm}\times\mathbf{B}=\mathbf{u}\times\mathbf{B}$.
The resulting stochastic flux identity is exact for smooth fields at
positive magnetic diffusivity. The flux identity passes to the ideal limit
under the stated convergence and uniform-integrability assumptions.
Persistent pair dispersion for the chosen Els\"asser trajectories
separately ensures non-Dirac trajectory limits. The analysis thus gives a conditional noncollapse criterion and
an induction-compatible path-line representation of statistical flux
freezing, without deriving the dispersion assumption from roughness alone.
\end{abstract}

\maketitle

\section{Introduction}

The classical Alfv\'en flux--freezing theorem~\cite{Alfven1942} in
\emph{ideal} magnetohydrodynamics (MHD), describing the transport of
magnetic flux by the flow, has played a central role in plasma physics.
At large kinetic and magnetic Reynolds numbers, however, the flow can
be turbulent. The
classical derivation of flux freezing relies on implicit regularity
assumptions for the velocity and magnetic fields. In particular, it
assumes that Lagrangian trajectories of the flow are uniquely defined; e.g., see~\cite{Eyink2011} and references therein.
In turbulence this assumption can easily fail, as the velocity field
may become only H\"older continuous with exponent $h = 1/3 < 1$, well
below the Lipschitz regularity sufficient in the Picard--Lindel\"of
theorem for uniqueness of trajectories~\cite{Sreenivasan1984,Eyink2018}.
This lack of regularity is closely related to the Onsager
conjecture~\cite{Onsager1945,EyinkSreenivasan2006,Eyink2018} concerning
anomalous dissipation in turbulent flows.

Roughness can permit nonunique solutions of the trajectory equation.
\emph{Spontaneous stochasticity} means that a non-Dirac trajectory law
persists as the regularizing noise vanishes; nonuniqueness alone does
not guarantee this~\cite{Bernard1998}. Instead of converging to a single
deterministic trajectory, the vanishing--noise limit can produce a
probability distribution of possible particle histories. In the context
of magnetized turbulence, this behavior has important implications for
magnetic transport and flux freezing~\cite{Eyink2011}. The stochastic representation of Ref.~\cite{Eyink2011} already uses
time-parametrized fluid trajectories to transport magnetic vectors and
surfaces. Magnetic field-line wandering provides a related geometric
picture of turbulent reconnection~\cite{Review2020}.

Field lines, however, are geometric objects defined only at a fixed
instant of time and require an auxiliary curve parameter. As a result
they do not naturally form dynamical objects evolving in time.
Especially in turbulent plasmas, field lines do not preserve any
identity as the magnetic field evolves~\cite{Review2020}.

In this paper we study \emph{magnetic path-lines}, motivated by
Ref.~\cite{Jafari2025}. These are trajectories $\mathbf{X}(t)$
generated by a prescribed time-dependent field $\mathbf{B}(\mathbf{x},t)$:
\begin{equation}
\dot{\mathbf{X}}(t)=\mathbf{B}(\mathbf{X}(t),t).
\label{path-lineeq}
\end{equation}

Here $\mathbf{B}$ has velocity units; when it denotes the physical
magnetic field, it is expressed in Alfv\'en-speed units at constant
density. In the kinematic analysis, $\mathbf{B}$ is simply prescribed
and divergence-free. It may be a magnetic field or a regularized
transport field, but no source-free flux law for that drift is assumed.
Unlike instantaneous field lines, the curves in
Eq.~\eqref{path-lineeq} sample the field at successive physical times.
Their stochastic properties can therefore be studied using Lagrangian
dynamics and stochastic processes~\cite{Jafari2025,Jafari2025MHD}.

The present work separates two questions: noncollapse of regularized
path-lines, and transport of the physical magnetic flux. Persistent
backward pair dispersion excludes deterministic subsequential limits of
the chosen trajectory law. A magnetic-flux identity additionally needs
an induction-compatible drift; this is supplied below by
$\mathbf{z}^{\pm}=\mathbf{u}\pm\mathbf{B}$, rather than by an
arbitrary magnetic-path-line field.

The path-line viewpoint of Ref.~\cite{Jafari2025} provides a framework
for interpreting magnetic reconnection. In that formulation the separation of nearby
path-lines naturally measures the rate at which magnetic structures
diverge in spacetime. In particular, the rate of separation of
path-lines is consistent with reconnection rates predicted by models
based on field-line diffusion. In laminar flows this reproduces the
Sweet--Parker reconnection rate

\[
v_{\rm rec}\sim \frac{v_A}{\sqrt{S}},
\]

where $v_A$ is the Alfv\'en speed and $S$ is the Lundquist number
\cite{Parker1957,Sweet1958}. For sub-Alfv\'enic turbulence with comparable injection and reconnection-layer
lengths, the corresponding scaling traditionally obtained using
field-line wandering is

\[
v_{\rm rec}\sim v_A\,M_A^2 ,
\]

where $M_A$ is the Alfv\'enic Mach number
\cite{LazarianandVishniac1999,Eyinketal2013,Lazarianetal2015,Lalescuetal2015,Review2020}.
This formalism therefore captures the same underlying stochastic
transport physics that appears in field-line formulations, while
providing a dynamical trajectory framework in spacetime~\cite{Jafari2025}.

Appendix~\ref{App0} distinguishes phase-space compression from magnetic
topology change and discusses the heuristic rate $\delta u(l)/l$.
Neither supplies a proof of the pair-dispersion hypothesis used below.

We will not pursue these dynamical considerations about reconnection and
topology change in the present work. Instead we focus on flux freezing
itself and analyze it directly in terms of path-lines. To investigate
the behavior of these trajectories in rough (turbulent) transport
fields, we introduce a stochastic regularization of
Eq.~(\ref{path-lineeq}) and examine the limit of vanishing regularizing
diffusivity. Its identification with magnetic diffusivity is made only
in the induction-compatible application of Sec.~\ref{sec:flux}. The central question is whether the regularized stochastic
trajectories collapse to a single deterministic path-line as the noise
amplitude tends to zero, or whether a finite spread of possible
histories persists.

A useful diagnostic of this behavior is backward pair dispersion.
Consider two trajectories that arrive at the same spacetime point
$(\mathbf{x},t)$ and are integrated backward in time. If the dynamics
were deterministic in the ideal limit, the two trajectories would
coincide at all earlier times. Persistent separation of such
trajectories therefore signals nonuniqueness of path histories and
provides a natural indicator of spontaneous stochasticity.

Our analysis shows that persistent backward pair dispersion excludes
deterministic subsequential limits of the chosen path-line law.
Path-space tightness and continuity of the limiting drift then identify
the limiting histories as solutions of the zero-noise trajectory equation.
These statements are conditional: roughness alone does not prove
persistent dispersion.

For physical magnetic-flux transport we use the Els\"asser drift
$\mathbf{z}^{\pm}$, giving an exact finite-diffusivity stochastic flux
identity. The flux identity passes to the ideal limit under the stated
convergence and uniform-integrability assumptions of Sec.~\ref{sec:flux}.
Persistent pair dispersion for the chosen Els\"asser trajectories
separately ensures non-Dirac trajectory limits. This is a specialization of the stochastic
transport representation~\cite{Eyink2011}, not a replacement of an
induction law by a purely magnetic advection equation.

Section~\ref{II} introduces the stochastic path-line formulation and
presents a physically motivated derivation of the main result.
Technical assumptions and detailed proofs are collected in
Appendix~\ref{App1}. A related trajectory formulation arises in the
Els{\"a}sser representation of incompressible MHD, where the dynamics is
expressed in terms of the fields \(\mathbf{z}^\pm = \mathbf{u} \pm
\mathbf{B}\). One may then consider trajectories generated by the
Els{\"a}sser fields,

\[
\dot{\mathbf{X}}^\pm = \mathbf{z}^\pm(\mathbf{X}^\pm,t),
\]

which are the transport directions relevant to the flux application.
Appendix~\ref{App2} relates them to the cross-advection in the Els\"asser
equations. The physical implications are discussed in
Sec.~\ref{Discussion}, and Appendix~\ref{sec:feynman-kac-review}
explains the scalar and vector stochastic representations.

\section{Stochastic Path-lines and Flux Freezing}\label{II}

We now develop the stochastic path-line formulation underlying the
preceding discussion. Our aim is to replace deterministic magnetic
path-lines by a regularized stochastic dynamics and then analyze the
zero--noise limit in a form directly analogous to stochastic
formulations of turbulent transport.

We consider trajectories defined by
\begin{equation}
\dot{\mathbf{X}}(t) = \mathbf{B}(\mathbf{X}(t),t),
\label{detpath}
\end{equation}
where $\mathbf{B}(\mathbf{x},t)$ is a divergence--free vector field on
$\mathbb{R}^3$; $\nabla\cdot \mathbf{B} = 0$. 

Sections II A--C treat $\mathbf{B}$ as a prescribed divergence-free
path-line drift; no induction equation is assumed for that kinematic
analysis. In Sec.~\ref{sec:flux} and Appendix~\ref{App2}, $\mathbf{B}$
denotes the physical magnetic field in Alfv\'en-speed units, and the
Els\"asser drift is $\mathbf{z}^{\pm}=\mathbf{u}\pm\mathbf{B}$.

If $\mathbf{B}$ is Lipschitz in space, the Picard--Lindel{\"o}f theorem
guarantees existence and uniqueness of solutions of \eqref{detpath}.
However, turbulent vector fields may violate this regularity
condition. For example, Kolmogorov scaling~\cite{Kolmogorov1941} suggests spatial H{\"o}lder
continuity with exponent $1/3$, which lies below the Lipschitz
threshold. The divergence--free condition implies that the
advection operator $\mathbf{B}\cdot\nabla$ is formally skew-adjoint in $L^2$
and that sufficiently regular deterministic flows preserve volume.

\subsection{Stochastic regularization and transport}

To regularize the dynamics we introduce a small stochastic perturbation.
For a constant $\kappa>0$ consider the backward stochastic differential equation
obtained by adding a diffusive perturbation to the path-line dynamics,
\begin{equation}
\begin{aligned}
\widehat d\mathbf{X}_s^\kappa
&=\mathbf{B}(\mathbf{X}_s^\kappa,s)\,ds
+\sqrt{2\kappa}\,\widehat d\mathbf{W}_s,\\
\mathbf{X}_t^\kappa&=\mathbf{x}.
\end{aligned}
\label{SDE}
\end{equation}
Here $\mathbf{W}_s$ is standard three-dimensional Brownian motion,
$\widehat d$ denotes the backward It\^o differential, and physical time
$s$ decreases from $t$. The condition $\mathbf{X}_t^\kappa=\mathbf{x}$
prescribes the terminal value of this backward process; it is not a
conditioning operation on a forward process with an unspecified prior.
The noise is additive, so the backward It\^o and Stratonovich forms of
the trajectory equation coincide. The regularizing diffusivity $\kappa$
is identified with the magnetic diffusivity $\eta$ only in
Sec.~\ref{sec:flux}.

The regularized dynamics induces a natural backward transport
representation. For $f\in C_b^2(\mathbb{R}^3)$, define
\begin{equation}
u^\kappa(\mathbf{x},t)
=
\mathbb E\!\left[f(\mathbf{X}_\tau^\kappa)\mid \mathbf{X}_t^\kappa=\mathbf{x}\right],
\qquad \tau<t.
\label{rep_main}
\end{equation}
The conditional expectation is understood through the transition
kernel $p^\kappa(\tau,\mathbf{y}|t,\mathbf{x})$ of the diffusion
process. Equivalently, in terms of the transition kernel
$p^\kappa(\tau,\mathbf{y}|t,\mathbf{x})$,
\begin{equation}
u^\kappa(\mathbf{x},t)
=
\int f(\mathbf{y})\,p^\kappa(\tau,\mathbf{y}|t,\mathbf{x})\,d\mathbf{y}.
\label{rep_kernel_main}
\end{equation}
Standard backward Kolmogorov theory then gives
\begin{equation}
\partial_t u^\kappa+\mathbf{B}\cdot\nabla u^\kappa
=\kappa\Delta u^\kappa,
\qquad u^\kappa(\mathbf{x},\tau)=f(\mathbf{x}).
\label{kol_main}
\end{equation}
Thus stochastic magnetic path-lines transport observables according to
an advection--diffusion equation with drift $\mathbf{B}$ and diffusivity $\kappa$, posed forward in $t$
with initial data at $t=\tau$. The trajectories themselves run backward
in physical time.
Physically, this equation describes the stochastic spreading of
backward path-line histories generated by the effective transport
field.

\subsection{Pair dispersion and noncollapse}

The crucial question is whether the stochastic regularization
collapses to a single deterministic path-line as $\kappa\to0$.
To test this, consider two trajectories
$\mathbf{X}_s^{\kappa,1}$ and $\mathbf{X}_s^{\kappa,2}$ arriving at
the same final spacetime point $(\mathbf{x},t)$. If the zero--noise
limit were deterministic, then these two backward trajectories would
converge to the same point at any earlier time $\tau<t$.

Persistent pair dispersion rules out this possibility.
If two trajectories conditioned to arrive at the same final point
retain a finite probability of separation at earlier times, the
zero--noise limit cannot collapse to a single deterministic path-line.
Here $\mathbf{X}^{\kappa,1}$ and $\mathbf{X}^{\kappa,2}$ denote two
independent realizations of the conditioned stochastic dynamics,
equivalently two independent samples from the transition kernel
$p^\kappa(\tau,\cdot\mid t,\mathbf{x})$.

\begin{theorem}[Noncollapse of the zero--noise limit]
\label{thm:main}
Let $p^\kappa(\tau,\cdot|t,\mathbf{x})$ denote the backward transition
kernel associated with the stochastic dynamics \eqref{SDE}, and let
$\mathbf{X}^{\kappa,1}$ and $\mathbf{X}^{\kappa,2}$ be two independent
samples drawn from this conditional distribution. Assume that the family
$\{p^\kappa(\tau,\cdot|t,\mathbf{x})\}_{0<\kappa\le\kappa_0}$ is tight on
$\mathbb{R}^3$ for some fixed $\kappa_0>0$. Suppose there exist
$\tau<t$, $r_*>0$, and $p_*>0$ such
that
\begin{equation}
\liminf_{\kappa\to0}
\mathbb P\big(
|\mathbf{X}_\tau^{\kappa,1}-\mathbf{X}_\tau^{\kappa,2}|\ge r_*
\big)
\ge p_* .
\label{disp}
\end{equation}
Such persistent pair dispersion is expected in turbulent flows with
rough advecting fields, where backward trajectories separate according
to Richardson--type scaling (see e.g.\
\cite{Bernard1998,Eyink2011}). Then the transition kernel
$p^\kappa(\tau,\cdot|t,\mathbf{x})$ cannot converge weakly to a Dirac
measure as $\kappa\to0$:
\begin{equation}
p^\kappa(\tau,\cdot|t,\mathbf{x})\nRightarrow \delta_{\mathbf{a}_*}.
\label{noncollapse_statement}
\end{equation}
Every weak subsequential limit is non-Dirac; uniqueness of a limiting
law is not asserted. The conclusion concerns the chosen path-line law
at the specified terminal point and earlier time.
\end{theorem}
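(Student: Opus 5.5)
Because every claim is about weak subsequential limits, I would argue by contradiction along a subsequence and use the product structure of the two independent samples. Write $\mu_\kappa := p^\kappa(\tau,\cdot|t,\mathbf{x})$. By independence, the pair $(\mathbf{X}_\tau^{\kappa,1},\mathbf{X}_\tau^{\kappa,2})$ has law $\mu_\kappa\otimes\mu_\kappa$ on $\mathbb{R}^3\times\mathbb{R}^3$. The hypothesis \eqref{disp} then reads
\[
\liminf_{\kappa\to0}(\mu_\kappa\otimes\mu_\kappa)(F)\ge p_*,\qquad F:=\{(\mathbf{y}_1,\mathbf{y}_2):|\mathbf{y}_1-\mathbf{y}_2|\ge r_*\}.
\]
The set $F$ is closed. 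Tightness of $\{\mu_\kappa\}$ together with Prokhorov's theorem guarantees that weak subsequential limits exist, so the statement is not vacuous.

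The main step runs as follows. Let $\kappa_n\to0$ and suppose $\mu_{\kappa_n}\Rightarrow\mu$. Weak convergence of the marginals implies weak convergence of the products, $\mu_{\kappa_n}\otimes\mu_{\kappa_n}\Rightarrow\mu\otimes\mu$. This is standard for separable spaces; one can check it directly on products of bounded continuous test functions, using tightness of the product family. Since $F$ is closed, the Portmanteau theorem gives
\[
(\mu\otimes\mu)(F)\ \ge\ \limsup_{n}(\mu_{\kappa_n}\otimes\mu_{\kappa_n})(F)\ \ge\ \liminf_{\kappa\to0}(\mu_\kappa\otimes\mu_\kappa)(F)\ \ge\ p_*>0.
\]
Now suppose, for contradiction, that $\mu=\delta_{\mathbf{a}_*}$ for some $\mathbf{a}_*$. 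Then $\mu\otimes\mu=\delta_{(\mathbf{a}_*,\mathbf{a}_*)}$, and this measure gives zero mass to $F$ because the diagonal point lies at distance $0<r_*$. This contradicts the display, so every weak subsequential limit is non-Dirac.

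Full convergence $\mu_\kappa\Rightarrow\delta_{\mathbf{a}_*}$ would make $\delta_{\mathbf{a}_*}$ the limit along every subsequence, which is what \eqref{noncollapse_statement} rules out. A quantitative version is also available: $(\mu\otimes\mu)(F)\ge p_*$ forces $\sup_{\mathbf{a}}\mu(\{\mathbf{a}\})\le\sqrt{1-p_*}$, although the theorem does not require this.

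The only step that needs care is the inequality direction. Portmanteau gives an upper-semicontinuity bound for closed sets, and the hypothesis is a $\liminf$ along the full family. That $\liminf$ dominates the $\limsup$ along no subsequence, but it is dominated by it, so the chain of inequalities above is correct. I would also remark that no regularity of $\mathbf{B}$ or path-space tightness is used here; those enter only in the later identification of limits as path-line solutions.
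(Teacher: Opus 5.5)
Your proof is correct, but its key step differs from the paper's. The paper also argues by contradiction along a sequence $\kappa_n\downarrow0$ with $p^{\kappa_n}(\tau,\cdot|t,\mathbf{x})\Rightarrow\delta_{\mathbf{a}_*}$, but it never forms the product measure. It applies the Portmanteau bound to the closed set $\{|\mathbf{y}-\mathbf{a}_*|\ge r_*/2\}$ for each marginal separately. The triangle inequality and a union bound then give $\mathbb{P}(|\mathbf{X}^{\kappa_n,1}_\tau-\mathbf{X}^{\kappa_n,2}_\tau|\ge r_*)\le\sum_{i=1}^2\mathbb{P}(|\mathbf{X}^{\kappa_n,i}_\tau-\mathbf{a}_*|\ge r_*/2)\to0$.

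That route is more elementary, since it needs no lemma on weak convergence of products. It also uses only the marginals, so it works for any coupling of the two trajectories. This is why the paper remarks that independence of the two noises is convenient but not essential.

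Your route uses independence to identify the joint law as $\mu_\kappa\otimes\mu_\kappa$. In exchange, it carries the dispersion bound over to every subsequential limit as $(\mu\otimes\mu)(F)\ge p_*$. That is a positive, quantitative property of the limit, not just the exclusion of Dirac limits. Any two points of an open ball $U$ of radius $r_*/2$ lie at distance $<r_*$, so $U\times U\subset F^c$ and hence $\mu(U)^2\le 1-p_*$. Thus no such ball, and not merely no atom, carries more than $\sqrt{1-p_*}$ of the limiting mass. This spread bound requires no moment or integrability assumptions.

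You could also drop independence within your own framework. Pass to a subsequential limit $\pi$ of the joint laws, which are tight because the marginals are. A coupling of $\delta_{\mathbf{a}_*}$ with itself must equal $\delta_{(\mathbf{a}_*,\mathbf{a}_*)}$, so the same Portmanteau step on $F$ still yields the contradiction. The square-root bound on ball masses would then be lost.
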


\begin{proof}[Proof sketch.]
Suppose some sequence $\kappa_n\downarrow0$ has
$p^{\kappa_n}(\tau,\cdot|t,\mathbf{x})\Rightarrow\delta_{\mathbf{a}_*}$.
Both endpoint marginals then concentrate at $\mathbf{a}_*$, so their
separation tends to zero in probability. This contradicts
\eqref{disp}, whose lower bound applies along every sequence.
Appendix~\ref{App1} gives the detailed argument.
\end{proof}

With $\limsup$ instead of $\liminf$ in \eqref{disp}, the same argument
excludes convergence of the full family to a Dirac measure, but need not
exclude deterministic subsequences. We use the stronger assumption to
obtain the stated subsequential conclusion.

Note that this theorem is conditional: it does not establish the
existence of persistent pair dispersion for the path-line dynamics
itself, but shows that whenever such dispersion occurs the stochastic
regularization cannot collapse to a deterministic trajectory. In
turbulent flows with rough advecting fields such persistent backward
dispersion is widely expected on physical grounds and is consistent
with Richardson-type separation observed in turbulent transport.

\subsection{Stochastic path-line freezing}

For the following path-space statement, also assume the tightness and
continuity conditions in Appendix~\ref{App1}. Theorem~\ref{thm:main}
then excludes Dirac path-space limits whenever \eqref{disp} holds.
We use the term \emph{stochastic path-line freezing} for this kinematic
persistence of random histories, not as an independent magnetic-flux law.

\begin{corollary}[Stochastic path-line freezing]
Under these assumptions, every weak subsequential path-space limit is a
non-Dirac probability measure supported on solutions of the zero-noise
path-line equation with the prescribed terminal value. Different
subsequences may select different measures.
\end{corollary}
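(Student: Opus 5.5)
The plan is to work on the path space $\Omega=C([\tau,t];\mathbb{R}^3)$ with the uniform topology. Let $Q^\kappa$ be the law of the backward path $s\mapsto\mathbf{X}^\kappa_s$ solving \eqref{SDE} with $\mathbf{X}^\kappa_t=\mathbf{x}$. I take the tightness condition of Appendix~\ref{App1} to mean that $\{Q^\kappa\}_{0<\kappa\le\kappa_0}$ is tight on $\Omega$. I take the continuity condition to mean that, along the relevant subsequence, $\mathbf{B}$ (or its regularization) converges to a limiting drift $\mathbf{B}_0$ that is continuous, or continuous $Q$-a.s. along paths, with enough local uniformity to pass to limits. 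By Prokhorov, every sequence $\kappa_n\downarrow0$ has a further subsequence with $Q^{\kappa_n}\Rightarrow Q$ for some probability measure $Q$ on $\Omega$. The proof then has three steps: identify the support of $Q$, check the terminal condition, and prove $Q$ is non-Dirac.

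\textbf{Support.} For each path define the functional
\[
\Phi(\omega)=\sup_{s\in[\tau,t]}\Big|\omega(s)-\mathbf{x}+\int_s^t \mathbf{B}_0(\omega(r),r)\,dr\Big| .
\]
The SDE gives $\mathbf{X}^\kappa_s=\mathbf{x}-\int_s^t\mathbf{B}(\mathbf{X}^\kappa_r,r)\,dr-\sqrt{2\kappa}\,(\mathbf{W}_t-\mathbf{W}_s)$. By Doob's inequality the noise term is $O(\sqrt{\kappa})$ uniformly in $s$, in probability. The drift discrepancy $\int|\mathbf{B}-\mathbf{B}_0|$ along paths also tends to zero in probability under the continuity assumption. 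Hence $\Phi(\mathbf{X}^{\kappa_n})\to0$ in probability. Since $\mathbf{B}_0$ is continuous, $\Phi$ is continuous on $\Omega$, and on tight sets it can be truncated by $\min(\Phi,1)$. The Portmanteau theorem applied to the closed set $\{\Phi\le\varepsilon\}$ then gives $Q(\Phi\le\varepsilon)\ge\limsup_n Q^{\kappa_n}(\Phi\le\varepsilon)=1$ for every $\varepsilon>0$. Therefore $Q(\Phi=0)=1$, which says exactly that $Q$ is supported on solutions of $\dot{\mathbf{X}}=\mathbf{B}_0(\mathbf{X},s)$ with $\mathbf{X}(t)=\mathbf{x}$. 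The terminal value also passes directly: evaluation at $t$ is continuous and equals $\mathbf{x}$ for every $Q^\kappa$.

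\textbf{Non-Dirac.} The evaluation map $e_\tau(\omega)=\omega(\tau)$ is continuous, so $Q^{\kappa_n}\circ e_\tau^{-1}=p^{\kappa_n}(\tau,\cdot|t,\mathbf{x})\Rightarrow Q\circ e_\tau^{-1}$. Suppose $Q=\delta_{\omega_*}$. Then the endpoint marginal along this subsequence converges to $\delta_{\omega_*(\tau)}$. This contradicts Theorem~\ref{thm:main}, because its conclusion holds along every subsequence thanks to the $\liminf$ in \eqref{disp}. So $Q$ is not Dirac; in fact its time-$\tau$ marginal is not Dirac, which is stronger. Different subsequences are not claimed to give the same $Q$, since no uniqueness argument is available.

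\textbf{Main obstacle.} I expect the support step to be the hardest, specifically showing that $\int_s^t\mathbf{B}(\mathbf{X}^\kappa_r,r)\,dr$ converges to the limiting drift integral along weakly converging paths when $\mathbf{B}$ is only H\"older or merely continuous. For a fixed continuous bounded $\mathbf{B}$ this follows from continuity of $\Phi$ together with Skorokhod representation. If $\mathbf{B}$ is unbounded or only locally continuous, I would first localize to the compact sets $K_\varepsilon$ supplied by tightness, where $\mathbf{B}$ is uniformly continuous on the image tube. I would then use the Arzel\`a--Ascoli modulus to control the integral uniformly. This is exactly where the continuity assumption of Appendix~\ref{App1} is needed.
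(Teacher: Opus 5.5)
Your proposal is correct and follows the paper's own argument. You use Prokhorov on $C([\tau,t];\mathbb{R}^3)$, then identify the support through the integral form of the backward SDE, with the noise term $\sqrt{2\kappa}\sup_s|\mathbf{W}_s-\mathbf{W}_t|\to0$ and continuity of the limiting drift. Non-Diracness then comes from Theorem~\ref{thm:main} via continuity of evaluation at time $\tau$. Your Portmanteau step on the closed set $\{\Phi\le\varepsilon\}$ spells out what the paper compresses into ``permits passage to the limit in the path integral.'' For a fixed continuous drift no Skorokhod argument is needed, since the bound on $\Phi(\mathbf{X}^\kappa)$ holds pathwise.
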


A physical statement about magnetic connectivity or reconnection requires
compatibility with the induction equation and additional dynamical input.

\subsection{Statistical flux freezing}\label{sec:flux}

Consider incompressible, constant-density resistive MHD, with the
physical magnetic field $\mathbf{B}$ in Alfv\'en-speed units and
$\kappa=\eta>0$, where $\eta$ denotes magnetic diffusivity. For either
fixed sign, $\mathbf{z}^{\pm}=\mathbf{u}\pm\mathbf{B}$ satisfies
\[
\begin{aligned}
\mathbf{z}^{\pm}\times\mathbf{B}&=\mathbf{u}\times\mathbf{B},\\
\partial_t\mathbf{B}
&=\nabla\times(\mathbf{z}^{\pm}\times\mathbf{B})
  +\kappa\Delta\mathbf{B}.
\end{aligned}
\]
Thus Appendix~\ref{AppD} applies with $\mathbf{H}=\mathbf{B}$ and
$\mathbf{V}=\mathbf{z}^{\pm}$. The equality uses the freedom to add a
field-parallel velocity to the induction drift; it is not a new
induction equation. The MHD fields are held fixed while averaging over
the auxiliary Brownian noise.

Let $\widetilde{\mathbf{A}}_{s,t}$ denote the corresponding backward
stochastic flow \eqref{AppD:backwardSDE}, and write
$\mathbf{B}_0=\mathbf{B}(\cdot,t_0)$. Within a flow realization all
surface points use the same Brownian motion. For a compact oriented
smooth surface $S_t$, Appendix~\ref{AppD} gives

\begin{equation}
\int_{S_t}\mathbf{B}(\mathbf{x},t)\cdot d\mathbf{A}
=
\mathbb{E}\!\left[
\int_{\widetilde{\mathbf{A}}_{t_0,t}(S_t)}
\mathbf{B}_0(\mathbf{a})\cdot d\mathbf{A}
\right],
\label{stoch-alfven-main}
\end{equation}

Equation~\eqref{stoch-alfven-main} is exact for smooth, unfiltered
resistive MHD at each $\kappa>0$. Theorem~\ref{thm:main} applies to
this representation only when \eqref{disp} holds for the chosen
$\mathbf{z}^{\pm}$-generated trajectories. Dispersion of
$\mathbf{B}$-generated trajectories alone does not establish that
condition.

For a singular ideal-limit statement, allow the fields and initial data
to depend on $\kappa$ and approach a rough limit, suppressing this
dependence in \eqref{stoch-alfven-main}. Assume that its left-hand flux
converges to $\Phi_t$, that its random initial-flux integral converges
in distribution to a random variable $\Phi_{t_0}$, and that these
random integrals are uniformly integrable. Then
$\Phi_t=\mathbb{E}\Phi_{t_0}$. This limiting flux identity does not by
itself assert the existence of a classical limiting random surface.
Endpoint noncollapse alone supplies neither flux convergence nor
control of surface deformation.

A fixed smooth drift has deterministic zero-noise collapse. With
suitable convergence of flows and surface elements, the stochastic
identity reduces to the usual deterministic transported-flux relation
in that regular limit. In a rough limit, noncollapse concerns the
chosen trajectory representation and does not exclude every possible
deterministic flux-transport representation. The stochastic
representation itself is the established one~\cite{Eyink2011}; the
present specialization uses the induction-compatible Els\"asser drift.

\section{Discussion}\label{Discussion}

The formulation separates the stochasticity of a prescribed path-line
law from the transport of physical magnetic flux. Earlier stochastic
flux-freezing work already uses fluid trajectories~\cite{Eyink2011}.
Here the kinematic analysis keeps the magnetic-path-line notation
$\dot{\mathbf{X}}=\mathbf{B}(\mathbf{X},t)$, while the physical flux
representation uses $\mathbf{z}^{\pm}=\mathbf{u}\pm\mathbf{B}$.
The distinction is the choice of drift, not the introduction of
time-parametrized trajectories into stochastic flux freezing.

The noncollapse theorem is conditional on persistent backward pair
dispersion. It is consistent with the trajectory-dispersion mechanism
studied in MHD simulations~\cite{Eyinketal2013,Review2020}, but does not
by itself prove reconnection, determine its rate, or exclude a different
deterministic representation of a particular magnetic field.

In the local Alfv\'en-wave approximation, disturbances propagate
relative to the plasma along $\pm\mathbf{B}$, so the corresponding
laboratory-frame directions are $\mathbf{z}^{\pm}$. In a general
nonlinear, nonuniform flow, the path-line equations are not complete
characteristic equations for wave amplitudes: pressure and
cross-advection remain in the Els\"asser dynamics. The exact reason
for using $\mathbf{z}^{\pm}$ in the flux theorem is the induction
identity in Sec.~\ref{sec:flux}, not a wave-packet approximation.
For filtered MHD, the subscale electromotive-force term must also be
retained; the unfiltered flux formula cannot simply be applied to
$\mathbf{u}_l\pm\mathbf{B}_l$ with that term omitted.

Persistent backward pair dispersion excludes deterministic
subsequential limits of the chosen path-line law. For the Els\"asser
drift this accompanies an exact finite-diffusivity magnetic-flux
identity. Its singular ideal-limit interpretation remains conditional
on the separate convergence and integrability assumptions in
Sec.~\ref{sec:flux}. No inference that flux is conserved \emph{only}
statistically for every possible representation follows from endpoint
noncollapse alone.

The stochastic path-line picture also has implications for magnetic
reconnection. Classical reconnection theories often assume that
magnetic connectivity evolves through localized nonideal effects,
while the large-scale magnetic field remains approximately frozen
into the flow. In turbulent plasmas, however, the stochastic
transport implied by spontaneous stochasticity allows magnetic
connectivity to evolve through turbulent dispersion of magnetic
structures. Recent work based on the path-line formulation has
derived expressions for rates of magnetic connectivity change in
turbulent flows~\cite{Jafari2025}. The present results provide a
complementary foundation for such approaches by establishing that the
underlying path-line dynamics may remain intrinsically stochastic in
the ideal limit.

Within this viewpoint, reconnection is associated with rapid
reorganization of magnetic pattern rather than with the identity of
individual field lines. A quantitative theory of reconnection rates requires coupling the
present path-line framework to the full induction equation and the
dynamics of the magnetic and velocity fields. The analysis presented
here establishes the kinematic ingredient of this picture by showing
that the ideal-limit path-line dynamics can remain intrinsically
stochastic.

The persistent pair-dispersion condition can be tested in numerical
simulations. For a fixed Eulerian field realization, integrate two
independent backward stochastic trajectories from $(\mathbf{x},t)$.
Individual endpoint samples estimate $p^\kappa(\tau,\cdot|t,\mathbf{x})$;
pair separations estimate the probability in \eqref{disp}, not the
single-particle kernel itself. A positive lower bound uniform as
$\kappa\to0$ tests the strengthened noncollapse hypothesis. For the
physical flux representation, the drift must be the chosen
$\mathbf{z}^{\pm}$. The field resolution and diffusivities must be
varied consistently with a rough-field limit: decreasing noise at a
fixed smooth resolved field instead gives deterministic collapse.

The path-line viewpoint also highlights the difference between
time-evolving trajectories and instantaneous geometric objects.
Magnetic field lines are defined only at a fixed time and require an
auxiliary curve parameter, whereas path-lines are ordinary dynamical
trajectories in spacetime. Treating these trajectories as the basic
objects provides a direct way to analyze stochastic transport
mechanisms arising from rough advecting fields.

The noncollapse proof is kinematic, and the finite-diffusivity flux
identity is conditional on prescribed fields satisfying induction.
Neither determines energy conversion or a reconnection rate. Those
questions require the dynamics of both the magnetic and velocity
fields. The Els\"asser extension in Appendix~\ref{App2} identifies an
induction-compatible setting for the path-line construction, without
assuming that roughness alone guarantees spontaneous stochasticity.

\appendix

\section{Topology Change and Spontaneous Stochasticity}\label{App0}

The phase-space interpretation proposed in Ref.~\cite{Jafari2025}
motivates a distinction between trajectory deformation, volume
compression, and magnetic topology change. Along a coarse-grained
magnetic path-line,
\[
\dot{\mathbf{X}}=\mathbf{B}_l(\mathbf{X},t),
\]
the chain rule gives
\[
\frac{d}{dt}\mathbf{B}_l(\mathbf{X}(t),t)
=\left[\partial_t\mathbf{B}_l+
(\mathbf{B}_l\cdot\nabla)\mathbf{B}_l\right]_{\mathbf{X}(t)}.
\]
The induction equation supplies the Eulerian time derivative, not this
total derivative. It does not by itself close a finite-dimensional
system in the values $(\mathbf{X},\mathbf{B}_l)$; spatial derivatives
and the velocity field also enter.

For a separately specified smooth closed model $\dot Z=F(Z,t)$,
$V(t)$ denotes an infinitesimal phase-space volume, and the divergence
below is evaluated along the corresponding trajectory $Z(t)$.
Liouville's identity gives
\[
\frac{d}{dt}\ln V=\nabla_Z\cdot F=\operatorname{tr}J_l,
\qquad J_l=\frac{\partial F}{\partial Z}.
\]
This measures signed phase-space volume change, not magnetic topology
change. For example, $\dot Z=-Z$ contracts volume but its finite-time
map $Z(0)\mapsto e^{-t}Z(0)$ is a diffeomorphism. A topology-change
criterion therefore requires a separate physical and mathematical
argument.

The inertial-range estimate
$|\mathbf{R}_l|\sim\delta u(l)\delta B(l)$ for the subscale
electromotive force motivates a characteristic rate
\[
\tau_T^{-1}(l)\sim\frac{\delta u(l)}{l}.
\]
For Kolmogorov scaling this is
$\tau_T^{-1}(l)\sim\varepsilon^{1/3}l^{-2/3}$. We retain it only as a
heuristic rate, not as the signed trace of a specified Jacobian. Its
growth does not by itself establish divergent Lyapunov exponents,
nonunique trajectories, or the pair-dispersion hypothesis \eqref{disp}.
The latter remains a separate assumption throughout the paper.

\section{Statistical Flux Freezing via Magnetic Path-Lines}\label{AppD}

In this appendix we derive a stochastic flux-freezing relation in
terms of magnetic path-lines. To avoid notational confusion, we denote
by $\mathbf{V}(\mathbf{x},t)$ the divergence-free transport field
generating the path-lines, and by $\mathbf{H}(\mathbf{x},t)$ the
divergence-free vector field transported by that flow. Here
$\mathbf{V}$ and $\mathbf{H}$ are conceptually distinct:
$\mathbf{V}$ generates the stochastic flow of trajectories, while
$\mathbf{H}$ denotes a vector field transported by that flow.

The derivation below establishes a general kinematic result: whenever a
divergence-free vector field $\mathbf{H}$ satisfies the transport
equation
\[
\partial_t \mathbf{H}
=
\nabla\times(\mathbf{V}\times\mathbf{H})
+
\kappa \Delta \mathbf{H},
\]
the flux of $\mathbf{H}$ through surfaces evolves according to a
stochastic flux-freezing relation associated with the stochastic flow
generated by $\mathbf{V}$. This statement does not require the
transport field $\mathbf{V}$ to coincide with the magnetic field or
with the fluid velocity.

For the physical MHD application, take $\mathbf{H}=\mathbf{B}$,
$\mathbf{V}=\mathbf{z}^{\pm}$, and $\kappa=\eta$, as specified in
Sec.~\ref{sec:flux}. Since
$\mathbf{z}^{\pm}\times\mathbf{B}=\mathbf{u}\times\mathbf{B}$,
Eq.~\eqref{AppD:induction} is exactly the resistive induction equation.
Taking $\mathbf{V}=\mathbf{u}$ recovers the representation of
Ref.~\cite{Eyink2011}. The fields are prescribed independently of the
auxiliary Brownian noise. A choice $\mathbf{V}=\mathbf{B}_l$ would
instead require the additional condition
$\nabla\times[(\mathbf{u}-\mathbf{B}_l)\times\mathbf{B}]=0$
for the same unfiltered source-free flux law.

Assume that $\mathbf{V}\in C_b^2(\mathbb{R}^3\times[t_0,t_f])$ with
$\nabla\cdot\mathbf{V}=0$, and let
$\mathbf{H}_0\in C_b^1(\mathbb{R}^3)$ satisfy
$\nabla\cdot\mathbf{H}_0=0$. 
Under these assumptions, for each fixed $\kappa>0$ the backward SDE
introduced below generates a stochastic flow of $C^1$ diffeomorphisms,
so that the Jacobian matrices and surface pullback used in the
derivation are well defined. For $\kappa>0$ consider the induction
equation driven by $\mathbf{V}$,
\begin{equation}\label{AppD:induction}
\partial_t \mathbf{H}
=
\nabla\times(\mathbf{V}\times\mathbf{H})
+
\kappa\Delta\mathbf{H},
\end{equation}
with $\nabla\cdot\mathbf{H}=0$ and
$\mathbf{H}(\mathbf{x},t_0)=\mathbf{H}_0(\mathbf{x})$. Using
$\nabla\cdot\mathbf{V}=\nabla\cdot\mathbf{H}=0$, this may be written
equivalently as
\begin{equation}
\partial_t \mathbf{H}
+
(\mathbf{V}\cdot\nabla)\mathbf{H}
=
(\mathbf{H}\cdot\nabla)\mathbf{V}
+
\kappa\Delta\mathbf{H}.
\label{AppD:induction2}
\end{equation}

Let $\widetilde{\mathbf{A}}_{s,t}(\mathbf{x})$ denote the backward
stochastic flow generated by $\mathbf{V}$, defined for $t_0\le s\le t$
by
\begin{equation}
\begin{aligned}
\hat d\widetilde{\mathbf{A}}_{s,t}(\mathbf{x})
&=\mathbf{V}(\widetilde{\mathbf{A}}_{s,t}(\mathbf{x}),s)\,ds
 +\sqrt{2\kappa}\,\hat d\mathbf{W}_s,\\
\widetilde{\mathbf{A}}_{t,t}(\mathbf{x})&=\mathbf{x}.
\end{aligned}
\label{AppD:backwardSDE}
\end{equation}
where $\hat d$ denotes the backward It\^o differential. Because the
drift field $\mathbf{V}$ is divergence-free and the noise is additive,
the stochastic flow generated by \eqref{AppD:backwardSDE} is
volume-preserving almost surely. For each fixed realization of the
Brownian motion, the map
$\mathbf{x}\mapsto\widetilde{\mathbf{A}}_{s,t}(\mathbf{x})$ is a
$C^1$ diffeomorphism, with inverse
$\widetilde{\mathbf{X}}_{s,t}=\widetilde{\mathbf{A}}_{s,t}^{-1}$.
The inverse map is the forward stochastic flow,
\[
\begin{aligned}
d\widetilde{\mathbf{X}}_{s,t}(\mathbf{a})
&=\mathbf{V}(\widetilde{\mathbf{X}}_{s,t}(\mathbf{a}),t)\,dt
 +\sqrt{2\kappa}\,d\mathbf{W}_t,\\
\widetilde{\mathbf{X}}_{s,s}(\mathbf{a})&=\mathbf{a}.
\end{aligned}
\]
Within a realization the same Brownian motion is used for every spatial
label. Independent Brownian realizations are used when sampling the
pair-dispersion law; this is distinct from independently perturbing
points of a single transported surface.

Write
\[
\mathbf{J}_{s,t}(\mathbf{a})
=
\nabla_{\mathbf{a}}\widetilde{\mathbf{X}}_{s,t}(\mathbf{a})
\]
for the Jacobian matrix of the forward flow, with
$(\nabla\mathbf{V})_{ij}=\partial_j V_i$. Differentiating the flow
equation with respect to $\mathbf{a}$ gives
\begin{equation}
\frac{d}{dt}\mathbf{J}_{s,t}(\mathbf{a})
=
\nabla\mathbf{V}\big(\widetilde{\mathbf{X}}_{s,t}(\mathbf{a}),t\big)
\,\mathbf{J}_{s,t}(\mathbf{a}),
\qquad
\mathbf{J}_{s,s}(\mathbf{a})=\mathbf{I}.
\label{AppD:J}
\end{equation}

Since $\nabla\cdot\mathbf{V}=0$, Liouville's formula implies
\begin{equation}
\det \mathbf{J}_{s,t}(\mathbf{a}) = 1.
\label{AppD:detJ}
\end{equation}

Define the random vector field
\begin{equation}
\widetilde{\mathbf{H}}(\mathbf{x},t)
=
\mathbf{J}_{t_0,t}(\mathbf{a})\,\mathbf{H}_0(\mathbf{a})
\Big|_{\mathbf{a}=\widetilde{\mathbf{A}}_{t_0,t}(\mathbf{x})}.
\label{AppD:Htilde}
\end{equation}

Equivalently,
\begin{equation}
\widetilde{\mathbf{H}}\big(\widetilde{\mathbf{X}}_{t_0,t}(\mathbf{a}),t\big)
=
\mathbf{J}_{t_0,t}(\mathbf{a})\,\mathbf{H}_0(\mathbf{a}).
\label{AppD:Htilde2}
\end{equation}

A forward It\^o--Wentzell calculation, interpreted weakly in space,
gives the stochastic transport equation
\begin{equation}
\begin{aligned}
d\widetilde{\mathbf{H}}
+\big[(\mathbf{V}\cdot\nabla)\widetilde{\mathbf{H}}
 -(\widetilde{\mathbf{H}}\cdot\nabla)\mathbf{V}
 -\kappa\Delta\widetilde{\mathbf{H}}\big]dt\\
+\sqrt{2\kappa}\,(d\mathbf{W}_t\cdot\nabla)\widetilde{\mathbf{H}}=0.
\end{aligned}
\label{AppD:SPDE}
\end{equation}
with initial condition
$\widetilde{\mathbf{H}}(\mathbf{x},t_0)=\mathbf{H}_0(\mathbf{x})$.

Taking expectations and using
$\mathbb{E}[(d\mathbf{W}_t\cdot\nabla)\widetilde{\mathbf{H}}]=0$
gives
\[
\partial_t \mathbb{E}\,\widetilde{\mathbf{H}}
+
(\mathbf{V}\cdot\nabla)\mathbb{E}\,\widetilde{\mathbf{H}}
=
(\mathbb{E}\,\widetilde{\mathbf{H}}\cdot\nabla)\mathbf{V}
+
\kappa\Delta \mathbb{E}\,\widetilde{\mathbf{H}},
\]
which coincides with \eqref{AppD:induction2}. By uniqueness of
solutions of \eqref{AppD:induction}, it follows that
\begin{equation}
\mathbf{H}(\mathbf{x},t)
=
\mathbb{E}\!\left[
\mathbf{J}_{t_0,t}(\mathbf{a})\,\mathbf{H}_0(\mathbf{a})
\Big|_{\mathbf{a}=\widetilde{\mathbf{A}}_{t_0,t}(\mathbf{x})}
\right].
\label{AppD:Lundquist}
\end{equation}

Equation \eqref{AppD:Lundquist} is the stochastic Lundquist formula in
path-line form.

We now derive the corresponding stochastic Alfv\'en theorem. Let
$S$ be a compact oriented smooth surface at time $t$, and let
$\widetilde{\mathbf{A}}_{t_0,t}(S)$ denote its random backward image
under the stochastic path-line flow. If
$d\mathbf{A}_{\mathbf{x}}$ and $d\mathbf{A}_{\mathbf{a}}$ are the
oriented area elements on $S$ and $\widetilde{\mathbf{A}}_{t_0,t}(S)$,
respectively, then the standard area transformation formula gives
\[
d\mathbf{A}_{\mathbf{x}}
=
\operatorname{cof}\mathbf{J}_{t_0,t}(\mathbf{a})\,d\mathbf{A}_{\mathbf{a}},
\]
where $\operatorname{cof}\mathbf{J}$ denotes the cofactor matrix.
Because of \eqref{AppD:detJ}, incompressibility implies
\[
\operatorname{cof}\mathbf{J}_{t_0,t}
=
\mathbf{J}_{t_0,t}^{-T}.
\]

Hence
\[
\big(\mathbf{J}_{t_0,t}\mathbf{H}_0\big)\cdot d\mathbf{A}_{\mathbf{x}}
=
\mathbf{H}_0\cdot d\mathbf{A}_{\mathbf{a}}.
\]

The bounded spatial gradient of $\mathbf{V}$ bounds the flow
Jacobian and inverse Jacobian on a finite time interval. Together with
bounded $\mathbf{H}_0$ and finite surface area, this justifies
interchanging expectation and surface integration. Using the area
identity in \eqref{AppD:Lundquist} yields
\begin{equation}
\int_S \mathbf{H}(\mathbf{x},t)\cdot d\mathbf{A}_{\mathbf{x}}
=
\mathbb{E}\!\left[
\int_{\widetilde{\mathbf{A}}_{t_0,t}(S)}
\mathbf{H}_0(\mathbf{a})\cdot d\mathbf{A}_{\mathbf{a}}
\right].
\label{AppD:Alfven}
\end{equation}

Equation \eqref{AppD:Alfven} is the stochastic Alfv\'en theorem in
path-line form: the flux through a surface at time $t$ equals the
ensemble average of the initial flux through the random surfaces
obtained by stochastic backward advection along path-lines.

The physical specialization and the additional assumptions for passing
\eqref{AppD:Alfven} to a singular ideal limit are those of
Sec.~\ref{sec:flux}. A fixed smooth drift has a deterministic zero-noise
limit; persistent noncollapse requires a rough drift or a simultaneous
rough-field limit. The present finite-diffusivity proof does not supply
uniform control of random surface fluxes in that singular limit.

\section{Stochastic path-lines}\label{App1}

In this appendix we summarize the mathematical assumptions and provide
the detailed arguments underlying the stochastic path-line formulation
used in the main text. We work in $\mathbb{R}^3$.

Throughout this appendix we consider $0<\kappa\le\kappa_0$ on a
finite time interval and assume:

(i) the drift $\mathbf{B}(\mathbf{x},t)$ is bounded and measurable;
if it depends on $\kappa$, its bound is uniform in $\kappa$ and that
dependence is suppressed in the regularized equations;

(ii) $\nabla\cdot\mathbf{B}=0$;

(iii) for each $\kappa>0$, Eq.~\eqref{SDE_app} with prescribed terminal
value has a weak Markov solution with backward transition density
$p^\kappa(\tau,\mathbf{y}|t,\mathbf{x})$;

(iv) the terminal-value path laws
\[
\mathrm{Law}(\mathbf{X}_\cdot^\kappa\mid
\mathbf{X}_t^\kappa=\mathbf{x})
\]
are tight on $C([\tau,t];\mathbb{R}^3)$ in the uniform topology.

For identifying limiting histories as solutions of
Eq.~\eqref{detpath_app}, additionally assume that the fixed drift is
continuous, or that the regularized drifts converge locally uniformly
in spacetime to a continuous limiting field $\mathbf{B}$. This last
condition is used in the path-space corollary, not in the endpoint
noncollapse proof.

Uniform boundedness of the drift and the Brownian increment estimates
provide tightness on finite time intervals in this setting. Weak
endpoint convergence is understood against bounded continuous test
functions. Conditioning notation denotes the prescribed terminal-value
law, not a Brownian bridge from an independently specified initial
particle distribution.

\subsection{Deterministic path-line dynamics}

We consider trajectories defined by
\begin{equation}
\dot{\mathbf{X}}(t) = \mathbf{B}(\mathbf{X}(t),t),
\label{detpath_app}
\end{equation}
where $\mathbf{B}(\mathbf{x},t)$ is a divergence--free vector field on
$\mathbb{R}^3$,
\[
\nabla\cdot \mathbf{B} = 0 .
\]

The endpoint noncollapse argument uses only the transition laws and the
specified pair-dispersion hypothesis. The additional assumptions above
justify the transport representation and the passage to path-space
limits; no dynamical closure for the drift is used.

If $\mathbf{B}$ is Lipschitz in space, the Picard--Lindel{\"o}f theorem
guarantees existence and uniqueness of solutions of \eqref{detpath_app}.
However, turbulent vector fields may violate this regularity
condition. For example, Kolmogorov scaling suggests spatial H{\"o}lder
continuity with exponent $1/3$, which lies below the Lipschitz
threshold. Such roughness is closely related to the Onsager conjecture
on anomalous dissipation and can lead to nonuniqueness of Lagrangian
trajectories.

The divergence--free condition implies that the advection operator
$\mathbf{B}\cdot\nabla$ is formally skew-adjoint in $L^2$ and that, for sufficiently regular fields, the associated deterministic flow is
volume preserving.

\subsection{Stochastic regularization}

To regularize the dynamics we introduce a small stochastic perturbation.
For $\kappa>0$ consider the backward stochastic differential equation
\begin{equation}
\begin{aligned}
\widehat d\mathbf{X}_s^\kappa
&=\mathbf{B}(\mathbf{X}_s^\kappa,s)\,ds
+\sqrt{2\kappa}\,\widehat d\mathbf{W}_s,\\
\mathbf{X}_t^\kappa&=\mathbf{x}.
\end{aligned}
\label{SDE_app}
\end{equation}
Here $\widehat d$ denotes the backward It\^o differential, with
physical time $s$ decreasing from $t$ and terminal value
$\mathbf{X}_t^\kappa=\mathbf{x}$. Equivalently, in increasing reversed
time $r=t-s$, $\mathbf{Y}_r=\mathbf{X}_{t-r}^\kappa$ satisfies
\[
d\mathbf{Y}_r=-\mathbf{B}(\mathbf{Y}_r,t-r)\,dr
+\sqrt{2\kappa}\,d\overline{\mathbf{W}}_r,
\qquad\mathbf{Y}_0=\mathbf{x},
\]
where $\overline{\mathbf{W}}_r$ is standard Brownian motion. This
fixes the sign convention without reversing physical time twice.
The parameter $\kappa$ is a regularizing diffusivity; it is identified
with magnetic diffusivity $\eta$ only in Sec.~\ref{sec:flux}.

The backward transition operators evolve in their terminal time with
the infinitesimal operator
\begin{equation}
L_s \phi(\mathbf{x})
=
- \mathbf{B}(\mathbf{x},s)\cdot\nabla \phi(\mathbf{x})
+
\kappa \Delta \phi(\mathbf{x}),
\label{generator_app}
\end{equation}
for smooth test functions $\phi$.

\subsection{Stochastic transport representation}

For $f\in C_b^2(\mathbb{R}^3)$ define
\begin{equation}
u^\kappa(\mathbf{x},t)
=
\int f(\mathbf{y})\,p^\kappa(\tau,\mathbf{y}\,|\,t,\mathbf{x})\,d\mathbf{y}.
\label{rep_app}
\end{equation}

\begin{proposition}[Backward Kolmogorov equation]
\label{prop:kol_app}
The function $u^\kappa$ defined in \eqref{rep_app} satisfies
\begin{equation}
\begin{aligned}
\partial_t u^\kappa+\mathbf{B}\cdot\nabla u^\kappa
&=\kappa\Delta u^\kappa,\\
u^\kappa(\mathbf{x},\tau)&=f(\mathbf{x}).
\end{aligned}
\label{kol_app}
\end{equation}
\end{proposition}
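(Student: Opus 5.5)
The plan is to derive \eqref{kol_app} from the Chapman--Kolmogorov property of the backward transition kernel, together with Itô's formula in the reversed-time variable $r=t-s$. First I would rewrite $u^\kappa$ probabilistically. Fix the terminal point $(\mathbf{x},t)$ and let $\mathbf{Y}_r=\mathbf{X}^\kappa_{t-r}$. By the reversed-time equation, $\mathbf{Y}$ solves the forward SDE $d\mathbf{Y}_r=-\mathbf{B}(\mathbf{Y}_r,t-r)\,dr+\sqrt{2\kappa}\,d\overline{\mathbf{W}}_r$ with $\mathbf{Y}_0=\mathbf{x}$. Hence $u^\kappa(\mathbf{x},t)=\mathbb{E}[f(\mathbf{Y}_{t-\tau})]$, and the initial condition $u^\kappa(\mathbf{x},\tau)=f(\mathbf{x})$ is immediate because $p^\kappa(\tau,\cdot|\tau,\mathbf{x})=\delta_{\mathbf{x}}$.

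Next I would use the Markov property from assumption (iii). For $t'=t+h$ with small $h>0$, Chapman--Kolmogorov gives
\[
u^\kappa(\mathbf{x},t+h)=\int u^\kappa(\mathbf{z},t)\,p^\kappa(t,\mathbf{z}|t+h,\mathbf{x})\,d\mathbf{z}
=\mathbb{E}\big[u^\kappa(\mathbf{X}^\kappa_t,t)\,\big|\,\mathbf{X}^\kappa_{t+h}=\mathbf{x}\big].
\]
Over the short interval $[t,t+h]$ the backward path starts at $\mathbf{x}$ and moves with drift $-\mathbf{B}$ in increasing reversed time. If $u^\kappa(\cdot,t)\in C_b^2$, Itô's formula applied to $u^\kappa(\cdot,t)$ along this short segment yields
\[
u^\kappa(\mathbf{x},t+h)-u^\kappa(\mathbf{x},t)=\mathbb{E}\int_0^h\big[-\mathbf{B}\cdot\nabla u^\kappa+\kappa\Delta u^\kappa\big](\mathbf{Y}_r,t)\,dr,
\]
where the stochastic integral has zero mean because $\nabla u^\kappa$ is bounded. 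Dividing by $h$ and letting $h\to0$ then uses continuity of the paths together with boundedness of $\mathbf{B}$. The result is $\partial_t u^\kappa=L_t u^\kappa=-\mathbf{B}\cdot\nabla u^\kappa+\kappa\Delta u^\kappa$, which is exactly \eqref{kol_app}, with the generator matching \eqref{generator_app}. The divergence-free condition (ii) is not needed for this step.

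The main obstacle is regularity. The formal argument needs $u^\kappa(\cdot,t)\in C^2_b$ and differentiability in $t$, and with only the bounded measurable drift of assumption (i) this cannot be guaranteed. I would handle this in two stages.

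\begin{itemize}
\item \textbf{Smooth drift.} First prove the statement when $\mathbf{B}$ is smooth, say $C_b^{2}$ in space and continuous in time. Then the stochastic flow is differentiable in $\mathbf{x}$, and $u^\kappa(\cdot,t)\in C_b^2$ follows from $f\in C_b^2$ by differentiating under the expectation. In this case the computation above is rigorous, and it is the standard backward Kolmogorov theorem.
\item \textbf{General drift.} For bounded measurable $\mathbf{B}$, mollify the drift to $\mathbf{B}_\epsilon$. The uniform bound and the nondegenerate noise $\kappa>0$ give parabolic $W^{2,1}_p$ estimates and, with them, stability of the transition densities. This lets me pass to the limit in the weak (distributional) form of \eqref{kol_app}. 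The equation is then understood in the strong-$W^{2,1}_{p,\mathrm{loc}}$ or distributional sense, which I expect the paper intends by ``standard backward Kolmogorov theory.''
\end{itemize}

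Uniqueness of the limit would rely on the weak Markov solution in (iii) being the one obtained from these approximations. I expect this identification to be the delicate point, and I would state it as part of assumption (iii).
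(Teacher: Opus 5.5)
Your proposal is correct and follows essentially the paper's route: the Chapman--Kolmogorov identity $P^\kappa_{t+h,\tau}=P^\kappa_{t+h,t}P^\kappa_{t,\tau}$, combined with the short-step expansion $P^\kappa_{t+h,t}\varphi=\varphi+hL_t\varphi+o(h)$ (which you justify by It\^o's formula in reversed time) for smooth drift, followed by mollification of the drift to obtain the distributional statement for bounded measurable $\mathbf{B}$. Two small refinements: the $h\to0$ limit in the smooth stage uses continuity of $\mathbf{B}$, evaluated at time $t+h-r$ along the short segment, and not merely its bound; and the identification you flag in the rough case needs no extra assumption, because for bounded drift with nondegenerate additive noise Girsanov's theorem gives both uniqueness in law and stability under mollification.
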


\begin{proof}
Write $P^\kappa_{t,\tau}f=u^\kappa(\cdot,t)$. For smooth
coefficients the short backward step gives
\[
P^\kappa_{t+h,t}\varphi=\varphi+hL_t\varphi+o(h).
\]
Together with the Chapman--Kolmogorov identity
$P^\kappa_{t+h,\tau}=P^\kappa_{t+h,t}P^\kappa_{t,\tau}$, this yields
\[
\partial_tP^\kappa_{t,\tau}f=L_tP^\kappa_{t,\tau}f.
\]
Substitution of \eqref{generator_app} gives \eqref{kol_app}, with
initial data from \eqref{rep_app}. For bounded measurable coefficients
the identity holds in the distributional sense, obtained by parabolic
approximation of the drift.
\end{proof}

Equation \eqref{kol_app} is an advection--diffusion equation with drift
$\mathbf{B}$ and diffusivity $\kappa$, posed forward in $t$ with
initial data at $t=\tau$. Backward trajectories therefore represent a
forward parabolic initial-value solution.

\subsection{Pair dispersion}

Consider two independent realizations of \eqref{SDE_app}, conditioned
to arrive at the same spacetime point $(\mathbf{x},t)$:
\[
\mathbf{X}_s^{\kappa,1}, \qquad \mathbf{X}_s^{\kappa,2}.
\]

Independence of the two Brownian noises is convenient but not
essential; the argument below requires only convergence of the
individual marginals.

\subsection{Noncollapse of the zero--noise limit}

\begin{theorem}[Non--Dirac zero--noise limit]
\label{thm:main_app}
Suppose there exist $\tau<t$, $r_*>0$, and $p_*>0$ such that
\begin{equation}
\liminf_{\kappa\to0}
\mathbb P
\big(
|\mathbf{X}_\tau^{\kappa,1}-\mathbf{X}_\tau^{\kappa,2}|
\ge r_*
\big)
\ge p_* .
\label{disp_app}
\end{equation}
Every weak subsequential endpoint limit is non-Dirac. In particular,
the transition kernel cannot converge weakly to a Dirac measure:
\[
p^\kappa(\tau,\cdot\,|\,t,\mathbf{x})
\nRightarrow
\delta_{\mathbf{a}_*}.
\]
Here weak convergence refers to convergence of probability measures on
$\mathbb{R}^3$ against bounded continuous test functions.
\end{theorem}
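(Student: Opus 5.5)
We argue by contradiction along an arbitrary sequence. Fix $\tau<t$, $r_*$, $p_*$ as in \eqref{disp_app}. Suppose some sequence $\kappa_n\downarrow 0$ has $p^{\kappa_n}(\tau,\cdot|t,\mathbf{x})\Rightarrow\mu$ weakly, with $\mu=\delta_{\mathbf{a}_*}$ for some $\mathbf{a}_*\in\mathbb{R}^3$. Since a weak limit of a family either fails to exist or is also the limit along every sequence, this single case covers both claims: that every weak subsequential endpoint limit is non-Dirac, and that the full family cannot converge to a Dirac measure. The only input is the marginal laws. As remarked before the theorem, independence of $\mathbf{X}^{\kappa,1}$ and $\mathbf{X}^{\kappa,2}$ is not needed; we only use that each $\mathbf{X}^{\kappa_n,i}_\tau$ has law $p^{\kappa_n}(\tau,\cdot|t,\mathbf{x})$.

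The first step converts weak convergence to a Dirac mass into convergence in probability. Fix $\varepsilon>0$ and take a continuous cutoff $\varphi_\varepsilon$ with $0\le\varphi_\varepsilon\le1$, equal to $1$ outside the open ball $B(\mathbf{a}_*,\varepsilon)$ and to $0$ on $B(\mathbf{a}_*,\varepsilon/2)$. Then
\[
\mathbb{P}\big(|\mathbf{X}^{\kappa_n,i}_\tau-\mathbf{a}_*|\ge\varepsilon\big)\le\int\varphi_\varepsilon\,dp^{\kappa_n}\to\varphi_\varepsilon(\mathbf{a}_*)=0 .
\]
Equivalently, one can apply the portmanteau theorem to the closed set $\{|\mathbf{y}-\mathbf{a}_*|\ge\varepsilon\}$. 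This holds for $i=1,2$.

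The second step is the triangle inequality with $\varepsilon=r_*/2$. The event $|\mathbf{X}^{\kappa_n,1}_\tau-\mathbf{X}^{\kappa_n,2}_\tau|\ge r_*$ forces at least one of $|\mathbf{X}^{\kappa_n,i}_\tau-\mathbf{a}_*|\ge r_*/2$. A union bound then gives
\[
\mathbb{P}\big(|\mathbf{X}^{\kappa_n,1}_\tau-\mathbf{X}^{\kappa_n,2}_\tau|\ge r_*\big)\le\sum_{i=1,2}\mathbb{P}\big(|\mathbf{X}^{\kappa_n,i}_\tau-\mathbf{a}_*|\ge r_*/2\big)\to0 .
\]
But \eqref{disp_app} is a $\liminf$ over $\kappa\to0$, so it bounds the $\liminf$ along every sequence $\kappa_n\to0$ from below by $p_*>0$. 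This is a contradiction, and it proves both assertions.

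The main point needing care is not analytic but logical, namely the quantifier structure. The $\liminf$ hypothesis is what makes the bound hold along every subsequence; with a $\limsup$ it would only rule out convergence of the full family. A second point is that weak subsequential limits must be probability measures on $\mathbb{R}^3$ rather than sub-probability ones. This is where the tightness assumption of Theorem~\ref{thm:main} (or the path-space tightness (iv), projected to time $\tau$) enters. Prokhorov's theorem then guarantees that subsequential limits exist and are genuine probability measures, so the statement is not vacuous. The contradiction argument itself needs only the convergence to $\delta_{\mathbf{a}_*}$ assumed toward contradiction.
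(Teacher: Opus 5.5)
Your proposal is correct and follows the paper's proof step for step: you assume a sequence $\kappa_n\downarrow0$ along which the kernel converges to $\delta_{\mathbf{a}_*}$, deduce that each marginal concentrates within $r_*/2$ of $\mathbf{a}_*$, and use the triangle inequality with a union bound to contradict the $\liminf$ hypothesis along that sequence. Your cutoff-function (or portmanteau) justification of the concentration step, and your remarks on the quantifier structure and on tightness supplying subsequential limits, make explicit what the paper states in a single line.
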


\begin{proof}
Suppose some sequence $\kappa_n\downarrow0$ satisfies
\[
p^{\kappa_n}(\tau,\cdot|t,\mathbf{x})
\Rightarrow\delta_{\mathbf{a}_*}.
\]
For $i=1,2$, concentration of the marginal gives
$\mathbb{P}(|\mathbf{X}^{\kappa_n,i}_\tau-\mathbf{a}_*|
\ge r_*/2)\to0$. The triangle inequality and union bound imply
\[
\begin{aligned}
&\mathbb{P}\big(|\mathbf{X}^{\kappa_n,1}_\tau-
\mathbf{X}^{\kappa_n,2}_\tau|\ge r_*\big)\\
&\quad\le\sum_{i=1}^2\mathbb{P}\big(
|\mathbf{X}^{\kappa_n,i}_\tau-\mathbf{a}_*|\ge r_*/2\big)
\longrightarrow0.
\end{aligned}
\]
This contradicts \eqref{disp_app} along that sequence. Hence no
subsequential limit is Dirac. Tightness supplies weak subsequential
limits, but their uniqueness is not asserted.
\end{proof}

\subsection{Stochastic path-line freezing}

By tightness, subsequences of the terminal-value path laws converge to
probability measures $\mu_{\mathbf{x},t}$ on
$C([\tau,t];\mathbb{R}^3)$. To identify their support, use the
integral form of \eqref{SDE_app}:
\[
\begin{aligned}
&\sup_{\tau\le s\le t}\left|\mathbf{X}^\kappa_s-\mathbf{x}
+\int_s^t\mathbf{B}(\mathbf{X}^\kappa_r,r)\,dr\right|\\
&\qquad\le\sqrt{2\kappa}\sup_{\tau\le s\le t}|\mathbf{W}_s-\mathbf{W}_t|.
\end{aligned}
\]
The right-hand side tends to zero in probability. Continuity of the
limiting drift, together with local uniform convergence when the drift
varies, permits passage to the limit in the path integral. Thus each
limit is supported on solutions of \eqref{detpath_app} with terminal
value $\mathbf{x}$.

\begin{corollary}[Stochastic path-line freezing]
Under the path-space tightness and drift-continuity assumptions, if
\eqref{disp_app} holds then every subsequential path-space limit
$\mu_{\mathbf{x},t}$ is non-Dirac and supported on admissible
zero-noise histories. Uniqueness of this measure is not asserted.
\end{corollary}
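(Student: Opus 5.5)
The plan is to combine three ingredients already in place: path-space tightness (iv), the support identification sketched just before the statement, and Theorem~\ref{thm:main_app}. I would prove two facts about any subsequential limit $\mu_{\mathbf{x},t}$ of the terminal-value path laws. First, $\mu_{\mathbf{x},t}$ is supported on zero-noise histories. Second, $\mu_{\mathbf{x},t}$ is not a Dirac mass on $C([\tau,t];\mathbb{R}^3)$.

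\textbf{Step 1: existence of limits.} Fix a sequence $\kappa_n\downarrow0$. By (iv) and Prokhorov's theorem, there is a further subsequence along which the path laws converge weakly to some $\mu_{\mathbf{x},t}$.

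\textbf{Step 2: support.} By Skorokhod representation, realize the convergence almost surely in the uniform topology: $\mathbf{X}^{\kappa_n}\to\mathbf{X}$ uniformly on $[\tau,t]$. The displayed integral inequality bounds the defect
\[
\mathbf{X}^{\kappa_n}_s-\mathbf{x}+\int_s^t\mathbf{B}(\mathbf{X}^{\kappa_n}_r,r)\,dr
\]
by $\sqrt{2\kappa_n}\sup_{\tau\le s\le t}|\mathbf{W}_s-\mathbf{W}_t|$, which tends to $0$ in probability. Pass to a further subsequence so that this convergence is almost sure.

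Next, control the drift term using the drift-continuity assumption. The paths lie almost surely in a compact set, because uniformly convergent continuous paths are bounded. On that compact set, $\mathbf{B}_{\kappa_n}\to\mathbf{B}$ uniformly, and $\mathbf{B}$ is uniformly continuous there. Together these give $\mathbf{B}_{\kappa_n}(\mathbf{X}^{\kappa_n}_r,r)\to\mathbf{B}(\mathbf{X}_r,r)$ uniformly in $r$. Dominated convergence, using the uniform bound from (i), then passes the limit inside the integral. The limit satisfies $\mathbf{X}_s=\mathbf{x}-\int_s^t\mathbf{B}(\mathbf{X}_r,r)\,dr$ almost surely, i.e.\ it solves \eqref{detpath_app} with terminal value $\mathbf{x}$. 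The set of such solutions is closed in $C([\tau,t])$, so $\mu_{\mathbf{x},t}$ assigns it full mass.

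\textbf{Step 3: non-Dirac.} The evaluation map $\pi_\tau:\omega\mapsto\omega(\tau)$ is continuous. By the continuous mapping theorem, $(\pi_\tau)_\#\mu_{\mathbf{x},t}$ is the weak limit of $p^{\kappa_n}(\tau,\cdot|t,\mathbf{x})$ along the same subsequence. Suppose $\mu_{\mathbf{x},t}=\delta_\gamma$ for some path $\gamma$. Then the endpoint laws would converge to $\delta_{\gamma(\tau)}$, contradicting Theorem~\ref{thm:main_app} under \eqref{disp_app}. Hence $\mu_{\mathbf{x},t}$ is non-Dirac, and because it is supported on admissible histories, those histories are nonunique.

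Uniqueness of the limit is not claimed: different subsequences in Step 1 may select different measures.

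\textbf{Main obstacle.} I expect the main difficulty to be Step 2, specifically passing the limit in the drift integral when $\mathbf{B}$ depends on $\kappa$. Step 3 is a direct projection argument. For Step 2, I would first try to handle it entirely through local uniform convergence on the almost-sure compact path range given by Skorokhod. A bounded-measurable drift without continuity would not suffice there, which is why the extra continuity hypothesis is imposed.
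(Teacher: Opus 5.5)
Your proposal is correct and follows the paper's own route: tightness gives subsequential path-space limits, the integral-form defect bound together with drift continuity (and local uniform convergence when the drift varies) identifies the support, and continuity of evaluation at $\tau$ reduces non-Diracness to Theorem~\ref{thm:main_app}. Your Skorokhod step spells out the passage to the limit in the drift integral, which the paper states in one line. It works because the defect is a measurable functional of the path alone, so its convergence in probability carries over to the Skorokhod copies, which have no Brownian motion of their own.
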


Indeed, evaluation at time $\tau$ is continuous in the uniform path
topology. The marginal of every path-space limit is therefore a
non-Dirac endpoint limit by Theorem~\ref{thm:main_app}. This establishes
trajectory stochasticity, not a magnetic-flux law without induction
compatibility.

\subsection{Remarks}

\textbf{Quantitative noncollapse.}
A lower bound
\[
\mathrm{Var}(\mathbf{X}_\tau^\kappa\mid
\mathbf{X}_t^\kappa=\mathbf{x})\ge c>0
\]
excludes Dirac subsequential limits provided squared endpoints are
uniformly integrable. Here variance means the trace of the covariance
matrix. Uniformly bounded drifts on a finite interval give uniformly
bounded fourth endpoint moments for $0<\kappa\le\kappa_0$, which
supply the required integrability. Thus the limiting backward endpoints
retain a finite spread whenever this variance bound holds.

\textbf{Two--particle formulation.}
For independent samples the two-particle kernel is the product
\[
\begin{aligned}
p_2^\kappa(\tau,\mathbf{y}_1,\mathbf{y}_2|t,\mathbf{x})
={}&p^\kappa(\tau,\mathbf{y}_1|t,\mathbf{x})\\
&\times p^\kappa(\tau,\mathbf{y}_2|t,\mathbf{x}).
\end{aligned}
\]
Condition~\eqref{disp_app} is the positive $\liminf$ bound on this
kernel integrated over $|\mathbf{y}_1-\mathbf{y}_2|\ge r_*$. In
particular, no sequence $\kappa_n\downarrow0$ can satisfy
\[
p_2^{\kappa_n}(\tau,\cdot,\cdot|t,\mathbf{x})
\Rightarrow\delta_{(\mathbf{a}_*,\mathbf{a}_*)}.
\]

\textbf{Path--space topology.}
Limits are taken in the uniform topology on
$C([\tau,t];\mathbb{R}^3)$. A one-point path law does not by itself
construct a limiting stochastic flow acting on entire surfaces; the
flux-limit assumptions in Sec.~\ref{sec:flux} are separate.

\section{Els{\"a}sser path-line formulation}\label{App2}

The stochastic path-line formalism developed in the main text and
Appendix~\ref{App1} can be extended to trajectory dynamics generated
by the Els{\"a}sser fields of incompressible magnetohydrodynamics.
Introducing the Els{\"a}sser variables
\[
\mathbf{z}^\pm = \mathbf{u} \pm \mathbf{B} ,
\]
for the following displayed evolution equation assume $\nu=\eta$.
The incompressible MHD equations may then be written in Els{\"a}sser form as
\[
\partial_t \mathbf{z}^\pm + (\mathbf{z}^\mp \cdot \nabla) \mathbf{z}^\pm
=
-\nabla P_* + \nu \Delta \mathbf{z}^\pm ,
\]
with $\nabla\cdot \mathbf{z}^\pm = 0$. Here $P_*=p/\rho+|\mathbf{B}|^2/2$ is the total
pressure per unit mass, $p$ is the fluid pressure, and $\nu$ is the
kinematic viscosity. Here $\mathbf{B}$ is in
Alfv\'en-speed units. The restriction $\nu=\eta$ is not needed for
the magnetic-flux identity in Sec.~\ref{sec:flux}.

In analogy with the magnetic path-line equation
$\dot{\mathbf{X}} = \mathbf{B}(\mathbf{X},t)$, one may consider the
Els{\"a}sser trajectory equations
\[
\dot{\mathbf{X}}^\pm(t) = \mathbf{z}^\pm(\mathbf{X}^\pm(t),t).
\]

These curves are generated by the Els\"asser vector fields. The
$\mathbf{z}^{\mp}$ field cross-advects $\mathbf{z}^{\pm}$ in the
Els\"asser equation, so the sign on $\mathbf{X}^{\pm}$ labels its
drift, not a fluctuation with the same sign that it carries. Pressure
and diffusion also remain in that equation. Roughness may permit
nonunique histories, but does not alone establish their occurrence.

A stochastic regularization can therefore be introduced:
\[
\begin{aligned}
\widehat d\mathbf{X}^{\pm,\kappa}_s
&=\mathbf{z}^\pm(\mathbf{X}^{\pm,\kappa}_s,s)\,ds
+\sqrt{2\kappa}\,\widehat d\mathbf{W}_s,\\
\mathbf{X}^{\pm,\kappa}_t&=\mathbf{x}.
\end{aligned}
\]

The same arguments used in Section~\ref{II} and Appendix~\ref{App1}
apply to this system. In particular, if backward pair dispersion
persists in the zero-noise limit, the transition kernels of the
stochastic Els{\"a}sser dynamics cannot collapse to Dirac measures.
Under the same assumptions, every subsequential endpoint limit is
non-Dirac. The dispersion hypothesis must be tested for the chosen
Els\"asser sign; it does not follow from dispersion of the magnetic
path-lines. For $\kappa=\eta$, these are also the trajectories in the
exact flux representation of Sec.~\ref{sec:flux}.

This extension indicates that spontaneous stochasticity is not tied
specifically to magnetic path-lines but can arise more generally in
trajectory dynamics generated by rough advecting fields appearing in
the MHD system. In particular, stochastic transport may arise in
multiple MHD propagation channels associated with the Els{\"a}sser fields.

\section{Feynman--Kac, stochastic trajectories, and magnetic-vector transport}
\label{sec:feynman-kac-review}

This appendix explains the scalar and vector stochastic representations
used above; the magnetic representation is that of
Ref.~\cite{Eyink2011}. We use bounded initial scalar data and smooth
bounded advection fields; the potential below is bounded and continuous.
The magnetic-vector construction uses the regularity assumptions of
Appendix~\ref{AppD}.

The purpose of this section is to clarify the relation between diffusion
terms in partial differential equations and stochastic trajectories.  The
main conceptual point is the following.  A diffusion equation can sometimes
be interpreted as a Fokker--Planck equation for a probability density, but
this is not the only, nor the most general, probabilistic interpretation.
Feynman--Kac is instead a representation theorem: it expresses the solution
of certain linear parabolic partial differential equations as an expectation
over stochastic paths.  The field being represented need not be a probability
density.  It may be a temperature, a concentration, a passive scalar, a vector
field, or a magnetic field.  The probability measure belongs to the stochastic
paths, not necessarily to the field itself.

This distinction is especially important for magnetic induction.  The magnetic
field is not a probability density.  Nevertheless, the resistive term
\(\lambda \Delta \boldsymbol{B}\) can be represented by Brownian wandering of
stochastic trajectories, while the stretching term
\((\boldsymbol{B}\cdot\nabla)\boldsymbol{u}\) is represented by a linear
transport equation for a vector carried along those trajectories.  Thus the
magnetic field at a spacetime point is represented not as the probability
density of a random object, but as an average of transported magnetic vectors
over an ensemble of stochastic histories.

\subsection{From Feynman path integrals to Feynman--Kac}

The analogy with the ordinary Feynman path integral is useful.  In quantum
mechanics, if the wavefunction at time \(0\) is \(\psi_0(x_i)\), then the
wavefunction at time \(t\) and final position \(x_f\) may be written as
\begin{equation}
    \psi(x_f,t)
    =
    \int K(x_f,t;x_i,0)\,\psi_0(x_i)\,dx_i ,
    \label{eq:qm-propagator}
\end{equation}
where \(K(x_f,t;x_i,0)\) is the quantum propagator.  Formally,
\begin{equation}
    K(x_f,t;x_i,0)
    =
    \int_{x(0)=x_i}^{x(t)=x_f}
    \mathcal{D}x\,
    \exp\!\left(\frac{i}{\hbar}S[x]\right).
    \label{eq:qm-path-integral}
\end{equation}
Thus the final wavefunction is obtained by summing over all initial points
\(x_i\), and over all paths connecting \(x_i\) to \(x_f\), with the oscillatory
weight \(\exp(iS/\hbar)\).

For the heat equation, the corresponding object is real and probabilistic
rather than complex and oscillatory.  Consider
\begin{equation}
    \partial_t c = \kappa \Delta c,
    \qquad
    c(x,0)=c_0(x).
    \label{eq:heat-equation}
\end{equation}
Here \(c_0(x)\) means the initial field at time \(0\).  It may be an initial
temperature profile, an initial dye concentration, or any other scalar
quantity.  It is not necessarily a probability density.

The solution is
\begin{equation}
    c(x_f,t)
    =
    \int_{\mathbb{R}^d}
    G_\kappa(x_f,t;x_i,0)\,c_0(x_i)\,dx_i,
    \label{eq:heat-kernel-solution}
\end{equation}
where
\begin{equation}
    G_\kappa(x_f,t;x_i,0)
    =
    \frac{1}{(4\pi\kappa t)^{d/2}}
    \exp\!\left[
        -\frac{|x_f-x_i|^2}{4\kappa t}
    \right]
    \label{eq:heat-kernel}
\end{equation}
is the heat kernel. As a function of $x_i$ it is the normalized density
of a backward Brownian ancestor with prescribed final point $x_f$.
This is not a Bayes-conditioned forward law with an arbitrary initial
particle distribution. The function $c_0(x_i)$ is the initial field
being averaged, not necessarily a probability density.

Equivalently, let \(W_t\) be a standard \(d\)-dimensional Brownian motion
defined on a probability space \((\Omega,\mathcal{F},\mathbb{P})\).  Brownian
motion has mean zero and covariance
\begin{equation}
    \mathbb{E}_{\mathbb{P}}\!\left[W_t^a W_t^b\right]
    =
    t\,\delta^{ab},
\end{equation}
where \(\mathbb{E}_{\mathbb{P}}\) denotes expectation with respect to Wiener
measure, i.e. the probability law of Brownian paths.  Then the random variable
\begin{equation}
    X_i = x_f + \sqrt{2\kappa}\,W_t
    \label{eq:random-ancestor-pure-diffusion}
\end{equation}
has density \(G_\kappa(x_f,t;x_i,0)\) as a function of \(x_i\).  Therefore
\begin{equation}
    c(x_f,t)
    =
    \mathbb{E}_{\mathbb{P}}
    \left[
        c_0\!\left(x_f+\sqrt{2\kappa}\,W_t\right)
    \right].
    \label{eq:heat-fk-simple}
\end{equation}
This formula is often the source of confusion because it appears to say:
``start at \(x_f\), walk randomly, and evaluate \(c_0\) where the walker
lands.''  That is mathematically correct, but the physical interpretation
should be stated carefully.  The point \(x_f\) is the fixed final point where
we want the solution.  The Brownian displacement samples possible ancestor
points \(x_i\) at time \(0\).  The value \(c(x_f,t)\) is the average of the
initial field \(c_0(x_i)\) over this Gaussian cloud of possible ancestors.

In other words, one should not imagine that the heat ``starts at \(x_f\)''
and then walks away.  Rather, to compute the value at \((x_f,t)\), diffusion
tells us to average over the values of the initial condition at all possible
earlier points that could have contributed to \(x_f\).  Brownian motion is
only a convenient way to sample those possible ancestors.

Thus the heat equation has two closely related, but conceptually distinct,
probabilistic interpretations.  First, if \(p(x,t)\) is the probability
density of Brownian particles satisfying
\begin{equation}
    dX_t = \sqrt{2\kappa}\,dW_t,
    \label{eq:brownian-sde}
\end{equation}
then \(p\) satisfies the Fokker--Planck equation
\begin{equation}
    \partial_t p = \kappa \Delta p.
    \label{eq:fp-pure-diffusion}
\end{equation}
In that special interpretation, the field \(p\) itself is a probability
density.

Second, if \(c\) is a temperature or scalar field satisfying the same heat
equation, then \(c\) need not be a probability density.  The same Brownian
paths can nevertheless be used to represent \(c\) by the expectation
\eqref{eq:heat-fk-simple}.  This is the basic Feynman--Kac idea: the solution
of a linear diffusion-type PDE can be written as an expectation over stochastic
paths.  The probability measure is the law of the paths; the field being
computed is the quantity averaged along, or over, those paths.

The analogy with the quantum path integral is then
\begin{equation}
    \text{quantum mechanics:}
    \qquad
    \int \mathcal{D}x\,
    \exp\!\left(\frac{i}{\hbar}S[x]\right),
\end{equation}
where paths carry complex amplitudes, whereas
\begin{equation}
    \text{Feynman--Kac:}
    \qquad
    \mathbb{E}_{\mathbb{P}}\!\left[\cdots\right],
\end{equation}
where paths are distributed according to a genuine probability measure,
typically Wiener measure or the law of an It\^o diffusion.  In imaginary
time, the oscillatory quantum weight is replaced by a real exponential
weight, and the path integral becomes closely related to Brownian expectation.

A slightly more general scalar example is
\begin{equation}
    \partial_t c
    =
    \kappa \Delta c - V(x,t)c,
    \qquad
    c(x,0)=c_0(x).
    \label{eq:heat-potential}
\end{equation}
For these bounded data, Feynman--Kac gives
\begin{equation}
    c(x,t)
    =
    \mathbb{E}_{x}
    \left[
        \exp\!\left(
            -\int_0^t V(X_s,t-s)\,ds
        \right)
        c_0(X_t)
    \right],
    \label{eq:fk-potential-forward}
\end{equation}
where $X_s=x+\sqrt{2\kappa}\,W_s$, and $\mathbb{E}_x$ denotes
expectation with $X_0=x$. The physical coefficient time is $t-s$
because this formula reconstructs an initial-value solution from its
value at the final time $t$.  The exponential
factor is a path-dependent weight.  If \(V>0\), paths are damped; if \(V<0\),
paths are amplified.  This is the diffusion analogue of a potential term in
imaginary-time quantum mechanics.

With advection, one considers the stochastic differential equation
\begin{equation}
    dX_s
    =
    \boldsymbol{u}(X_s,s)\,ds
    +
    \sqrt{2\kappa}\,dW_s.
    \label{eq:advection-diffusion-sde}
\end{equation}
The probability density of \(X_s\) satisfies the forward Fokker--Planck
equation
\begin{equation}
    \partial_t p
    =
    -\nabla\cdot(\boldsymbol{u}p)
    +
    \kappa\Delta p.
    \label{eq:fokker-planck-advection}
\end{equation}
For a fixed future time $t$, the observable
\begin{equation}
    q(x,s)=\mathbb{E}[f(X_t)\mid X_s=x],\qquad s\le t,
\end{equation}
satisfies a backward Kolmogorov equation in the starting time $s$,
\begin{equation}
\begin{aligned}
-\partial_s q(x,s)
&=\boldsymbol{u}(x,s)\cdot\nabla q(x,s)+\kappa\Delta q(x,s),\\
q(x,t)&=f(x).
\end{aligned}
\label{eq:backward-kolmogorov}
\end{equation}
This is the essential difference:
\begin{equation}
    \boxed{
    \begin{gathered}
\text{Fokker--Planck evolves}\\
\text{one-time particle probability densities,}
\end{gathered}
    }
\end{equation}
whereas
\begin{equation}
    \boxed{
    \begin{gathered}
\text{Feynman--Kac represents fields or observables}\\
\text{as expectations over paths.}
\end{gathered}
    }
\end{equation}
The same stochastic process can appear in both statements, but the objects
being evolved are different.

\subsection{Scalar fields, ancestor distributions, and the meaning of \(x\)}

It is useful to rewrite the scalar formula in explicitly backward language.
Suppose we want \(c(x_f,t_f)\).  The final point \(x_f\) and final time
\(t_f\) are fixed.  We then define a random ancestor \(A_{0,t_f}(x_f)\) at
time \(0\).  For pure diffusion,
\begin{equation}
    A_{0,t_f}(x_f)
    \overset{\mathrm{law}}{=}
    x_f+\sqrt{2\kappa}\,W_{t_f}.
    \label{eq:ancestor-pure-diffusion}
\end{equation}
The solution is
\begin{equation}
    c(x_f,t_f)
    =
    \mathbb{E}_{\mathbb{P}}
    \left[
        c_0\!\left(A_{0,t_f}(x_f)\right)
    \right].
    \label{eq:ancestor-expectation}
\end{equation}
Here the expectation is taken over all Brownian realizations \(\omega\in\Omega\),
or equivalently over all possible ancestor points \(A_{0,t_f}(x_f;\omega)\),
with probability determined by the Brownian transition kernel.

This makes the role of \(x_f\) clearer.  The point \(x_f\) is not being
forgotten.  It is the point about which the ancestor distribution is centered.
For small time \(t_f\), the ancestor cloud is narrow and concentrated near
\(x_f\).  For larger time, the ancestor cloud is broader.  The field value
\(c(x_f,t_f)\) is the average of \(c_0\) over that cloud.

For example, suppose
\begin{equation}
    c_0(x)
    =
    \begin{cases}
        100, & x<0,\\
        0, & x>0,
    \end{cases}
\end{equation}
in one spatial dimension.  Then
\begin{equation}
    c(x_f,t)
    =
    100\,
    \mathbb{P}\!\left(
        A_{0,t}(x_f)<0
    \right).
\end{equation}
The temperature at \(x_f\) and time \(t\) is \(100\) times the probability
that a backward Brownian ancestor of \((x_f,t)\) came from the initially hot
side.  The scalar \(c\) itself is not the probability density.  Rather, the
Brownian paths provide the probability distribution used to average the
initial scalar field.

In an advecting flow, the same idea holds, but the ancestor distribution is
no longer a simple Gaussian centered at \(x_f\).  If the scalar satisfies
\begin{equation}
    \partial_t c
    +
    \boldsymbol{u}\cdot\nabla c
    =
    \kappa \Delta c,
    \label{eq:scalar-advection-diffusion}
\end{equation}
then the backward trajectories have drift $\boldsymbol{u}$ and
diffusivity $\kappa$. With the same convention as
Eq.~\eqref{AppD:backwardSDE},
\begin{equation}
\begin{aligned}
\widehat d A_s
&=\boldsymbol{u}(A_s,s)\,ds+\sqrt{2\kappa}\,\widehat dW_s,\\
A_{t_f}&=x_f.
\end{aligned}
\label{eq:backward-ancestor-sde}
\end{equation}
where physical time $s$ decreases from $t_f$ to $0$ and the terminal
value is prescribed. Write $A_s=A_{s,t_f}(x_f)$ for this path. Then
\begin{equation}
    c(x_f,t_f)
    =
    \mathbb{E}
    \left[
        c_0(A_0)
        \,\middle|\,
        A_{t_f}=x_f
    \right],
    \label{eq:adv-diff-ancestor}
\end{equation}
where the expectation uses this prescribed terminal-value backward law,
not a forward diffusion conditioned from a separately specified prior.

In the limit \(\kappa=0\), there is no diffusion.  If the velocity field is
smooth enough to define unique trajectories, the ancestor distribution
collapses to a single deterministic point \(a\).  Then
\begin{equation}
    c(x_f,t_f)=c_0(a),
\end{equation}
where \(a\) is the unique initial point of the fluid trajectory that reaches
\(x_f\) at \(t_f\).  With diffusion, the single ancestor becomes a probability
distribution of ancestors, and the deterministic pullback of the initial
condition becomes an average over stochastic pullbacks.

Thus the difference between deterministic advection and advection--diffusion
is:
\begin{equation}
    \begin{gathered}
    \text{deterministic advection:}\\
    c(x_f,t_f)=c_0(\text{single ancestor}),
    \end{gathered}
\end{equation}
whereas
\begin{equation}
    \begin{gathered}
    \text{advection--diffusion:}\\
    c(x_f,t_f)=\mathbb{E}[c_0(\text{random ancestor})].
    \end{gathered}
\end{equation}
This is the scalar prototype for the magnetic-field representation.

\subsection{Magnetic induction as a vector Feynman--Kac representation}

We now turn to the magnetic case. Here $\lambda=\eta$ denotes magnetic
diffusivity, and $\kappa=\lambda$ for this magnetic representation.
For incompressible resistive MHD, the magnetic field satisfies
\begin{equation}
    \partial_t\boldsymbol{B}
    =
    \nabla\times(\boldsymbol{u}\times\boldsymbol{B})
    +
    \lambda \Delta \boldsymbol{B},
    \qquad
    \nabla\cdot\boldsymbol{u}=0,
    \qquad
    \nabla\cdot\boldsymbol{B}=0.
    \label{eq:resistive-induction}
\end{equation}
Using the vector identity
\begin{equation}
    \nabla\times(\boldsymbol{u}\times\boldsymbol{B})
    =
    (\boldsymbol{B}\cdot\nabla)\boldsymbol{u}
    -
    (\boldsymbol{u}\cdot\nabla)\boldsymbol{B}
    +
    \boldsymbol{u}(\nabla\cdot\boldsymbol{B})
    -
    \boldsymbol{B}(\nabla\cdot\boldsymbol{u}),
\end{equation}
and using \(\nabla\cdot\boldsymbol{u}=\nabla\cdot\boldsymbol{B}=0\), this becomes
\begin{equation}
    \partial_t\boldsymbol{B}
    +
    (\boldsymbol{u}\cdot\nabla)\boldsymbol{B}
    =
    (\boldsymbol{B}\cdot\nabla)\boldsymbol{u}
    +
    \lambda \Delta\boldsymbol{B}.
    \label{eq:induction-adv-stretch-diff}
\end{equation}
In components,
\begin{equation}
    \partial_t B_i
    +
    u_j\partial_j B_i
    =
    B_j\partial_j u_i
    +
    \lambda \Delta B_i.
    \label{eq:induction-components}
\end{equation}

This equation should be compared with scalar advection--diffusion,
\begin{equation}
    \partial_t c
    +
    \boldsymbol{u}\cdot\nabla c
    =
    \kappa\Delta c.
\end{equation}
The terms
\begin{equation}
    \boldsymbol{u}\cdot\nabla \boldsymbol{B}
    \quad\text{and}\quad
    \lambda\Delta\boldsymbol{B}
\end{equation}
have the same path interpretation as in the scalar case: the base point is
advected by \(\boldsymbol{u}\), and the diffusive term is represented by
Brownian wandering with amplitude \(\sqrt{2\lambda}\).  The new term is
\begin{equation}
    (\boldsymbol{B}\cdot\nabla)\boldsymbol{u}.
\end{equation}
This is the magnetic stretching and rotation term.  It says that the magnetic
vector is not merely carried as a passive scalar label.  It is deformed by the
velocity gradient along the trajectory.

In the ideal case \(\lambda=0\), along a deterministic plasma trajectory
\begin{equation}
    \frac{dX}{dt}
    =
    \boldsymbol{u}(X(t),t),
\end{equation}
the magnetic field satisfies
\begin{equation}
    \frac{d}{dt}\boldsymbol{B}(X(t),t)
    =
    (\boldsymbol{B}\cdot\nabla)\boldsymbol{u}(X(t),t).
\end{equation}
In components,
\begin{equation}
    \frac{dB_i}{dt}
    =
    B_j\,\partial_j u_i(X(t),t).
\end{equation}
Equivalently, if \(\boldsymbol{B}\) is treated as a column vector, this can be
written schematically as
\begin{equation}
    \frac{d\boldsymbol{B}}{dt}
    =
    \nabla\boldsymbol{u}(X(t),t)\,\boldsymbol{B},
    \label{eq:ideal-vector-transport}
\end{equation}
with $(\nabla\boldsymbol{u})_{ij}=\partial_j u_i$, as in
Appendix~\ref{AppD}.

For the resistive equation, define the forward stochastic flow
$X_r=\widetilde{\boldsymbol{X}}_{s,r}(a)$ with $X_s=a$ by
\begin{equation}
dX_r=\boldsymbol{u}(X_r,r)\,dr+\sqrt{2\lambda}\,dW_r,
\qquad X_s=a.
\label{eq:eyink-stochastic-path}
\end{equation}
The Brownian term represents the Ohmic diffusion term
$\lambda\Delta\boldsymbol{B}$. For a fixed realization, all spatial
labels use the same Brownian motion. The inverse map is
$A_{s,t}=\widetilde{\boldsymbol{X}}_{s,t}^{-1}$ and obeys the backward
SDE \eqref{eq:backward-ancestor-sde} with $\kappa=\lambda$.
Let $M_{s,t}(a)$ be the deformation matrix along the forward flow:
\begin{equation}
\begin{aligned}
\frac{d}{dt}M_{s,t}(a)
&=\nabla\boldsymbol{u}(\widetilde{\boldsymbol{X}}_{s,t}(a),t)
 M_{s,t}(a),\\
M_{s,s}(a)&=I.
\end{aligned}
\label{eq:deformation-matrix}
\end{equation}
Then Appendix~\ref{AppD}, with $\mathbf{V}=\mathbf{u}$, gives the
exact vector representation
\begin{equation}
\boldsymbol{B}(x_f,t_f)=\mathbb{E}\!\left[
M_{0,t_f}(a)\boldsymbol{B}_0(a)
\Big|_{a=A_{0,t_f}(x_f)}\right].
\label{eq:magnetic-vector-fk}
\end{equation}
The vertical bar means evaluation at the random inverse-flow label,
not conditioning a forward path with an unspecified initial law.
The matrix stretches and rotates the initial vector along the same
flow realization. The expectation is over the auxiliary Brownian noise
with the Eulerian fields held fixed.

This is the vector analogue of the scalar Feynman--Kac formula.  For a scalar
field, one averages initial scalar values:
\begin{equation}
c(x_f,t_f)=\mathbb{E}[c_0(A_{0,t_f}(x_f))].
\end{equation}
For the magnetic field, one averages transported initial vectors:
\begin{equation}
    \boldsymbol{B}(x_f,t_f)
    =
    \mathbb{E}
    \left[
        \text{stretched/rotated initial vector}
    \right].
\end{equation}
The magnetic field is not a probability density.  The probability measure is
the law of the stochastic trajectories.  The magnetic field is the vector
quantity being reconstructed by averaging over those trajectories.

A useful mnemonic is:
\begin{equation}
    \lambda\Delta\boldsymbol{B}
    \quad\longleftrightarrow\quad
    \begin{gathered}\text{Brownian wandering}\\\text{of the path position},\end{gathered}
\end{equation}
\begin{equation}
    \boldsymbol{u}\cdot\nabla\boldsymbol{B}
    \quad\longleftrightarrow\quad
    \text{advection of the base point},
\end{equation}
and
\begin{equation}
    (\boldsymbol{B}\cdot\nabla)\boldsymbol{u}
    \quad\longleftrightarrow\quad
    \begin{gathered}\text{stretching and rotation}\\\text{of the carried vector}.\end{gathered}
\end{equation}

This also clarifies the relation to stochastic flux-freezing.  In ordinary
ideal flux-freezing, a unique deterministic plasma trajectory is used to pull
back the magnetic field.  In resistive MHD, the Ohmic term produces a cloud of
stochastic ancestors.  The magnetic field at the final point is obtained by
transporting magnetic vectors along all those stochastic histories and then
averaging.  Thus the stochastic paths are not literal magnetic field lines,
and the magnetic field is not a probability density.  The paths provide the
probability measure over possible histories; the magnetic vectors are the
payload transported along those histories.

The same conceptual structure can be summarized as follows:
\begin{equation}
\begin{gathered}
\text{heat equation:}\\
\text{random ancestors + average initial numbers},
\end{gathered}
\end{equation}
\begin{equation}
\begin{gathered}
\text{scalar advection--diffusion:}\\
\text{random advective--diffusive ancestors}\\
+\;\text{average initial scalar labels},
\end{gathered}
\end{equation}
\begin{equation}
\begin{gathered}
\text{magnetic induction:}\\
\text{random advective--diffusive ancestors}\\
+\;\text{stretch/rotate initial magnetic arrows}\\
+\;\text{average arrows}.
\end{gathered}
\end{equation}
This is the meaning of the phrase ``vector Feynman--Kac representation.''  It
is not a Fokker--Planck interpretation of the magnetic field itself.  Rather,
it is a stochastic-path representation of the solution of a linear
vector-valued advection--diffusion--stretching equation.

\bibliography{main}
\end{document}